\documentclass[12pt]{article}
\usepackage{amsfonts}
\usepackage{enumerate}
\usepackage{amsmath}
\usepackage{amsthm}
\usepackage{color}
\usepackage{addfont}
\addfont{OT1}{rsfs10}{\rsfs}

\linespread{1}

\addtolength{\textwidth}{70pt}
\addtolength{\oddsidemargin}{-25pt}
\addtolength{\textheight}{60pt}

\def\1{{\bf 1}}

\begin{document}

\newtheorem{thm}{Theorem}[section]
\newtheorem{lem}[thm]{Lemma}
\newtheorem{prop}[thm]{Proposition}
\newtheorem{cor}[thm]{Corollary}
\newtheorem{defi}[thm]{Definition}
\newtheorem{remark}[thm]{Remark}
\newtheorem{result}[thm]{Result}
\newtheorem{exm}[thm]{Example}
\title
{\bf New extremal Type II $\mathbb{Z}_4$-codes of length 64 by the doubling method}
\author
{Sara Ban (sban@math.uniri.hr), ORCID: 0000-0002-1837-8701 \\
[3pt]
Sanja Rukavina (sanjar@math.uniri.hr), ORCID: 0000-0003-3365-7925\\
{\it  Faculty of Mathematics}\\
{\it  University of Rijeka, 51000 Rijeka, Croatia }\\[-15pt]
}
\date{}
\maketitle

\bigskip

%{\it Corresponding author: } 

\vspace*{0.2cm}

\begin{abstract}
\noindent
Extremal Type II $\mathbb{Z}_4$-codes are a class of self-dual $\mathbb{Z}_4$-codes with Euclidean weights divisible by eight and the largest possible minimum Euclidean weight for a given length. A small number of such codes is known for lengths greater than or equal to $48.$
The doubling method is a method for constructing Type II $\mathbb{Z}_4$-codes from a given Type II $\mathbb{Z}_4$-code. 
Based on the doubling method, in this paper we develop a method to construct new extremal Type II $\mathbb{Z}_4$-codes starting from an extremal Type II $\mathbb{Z}_4$-code of type $4^k$ with an extremal residue code and length $48, 56$ or $64$. 
Using this method, we construct three new extremal Type II $\mathbb{Z}_4$-codes of length $64$ and type $4^{31}2^2$. Extremal Type II $\mathbb{Z}_4$-codes of length $64$ of this type were not known before. Moreover, the residue codes of the constructed extremal $\mathbb{Z}_4$-codes are new best known $[64,31]$ binary codes and the supports of the minimum weight codewords of the residue code and the torsion code of one of these  codes form self-orthogonal $1$-designs.

\end{abstract}
{\bf Keywords:}  Type II $\mathbb{Z}_4$-code, extremal $\mathbb{Z}_4$-code, residue code, doubling method
\\
{\bf Mathematics Subject Classification:}  94B05, 05B99

\section{Introduction}
A $\mathbb{Z}_4$-code of length $n$ is a $\mathbb{Z}_4$-submodule of $\mathbb{Z}_4^n$, where $\mathbb{Z}_4$ denotes the ring of integers modulo $4$. Such a code is self-dual if it is equal to its dual code. Extremal Type II $\mathbb{Z}_4$-codes are a class of self-dual $\mathbb{Z}_4$-codes with Euclidean weights divisible by eight and the largest possible minimum Euclidean weight for a given length. A construction of such codes is of particular interest for both theoretical and practical reasons.

The construction of self-dual $\mathbb{Z}_4$-codes is an active area of research and many methods for constructing self-dual $\mathbb{Z}_4$-codes are known (see, e.g.,  \cite{Had},   
%\cite{40doubling}, 
\cite{bent}, \cite{BonnecazeSoleBachoc}, \cite{Chan},  \cite{ConS}, \cite{CMR},  \cite{Hnovi},  
%\cite{5664},  \cite{HaradaLee}, 
\cite{survey}).
 In 2012, Chan introduced a method for constructing Type II $\mathbb{Z}_4$-codes of type
$4^{k_1}2^{k_2}$ from a given Type II $\mathbb{Z}_4$-code of type $4^{k_1+1}2^{k_2-2}$ called the {\it doubling method} (see \cite{Chan}). Based on this method, new methods for constructing extremal Type II $\mathbb{Z}_4$-codes of lengths $24$, $32$ and $40$ have been developed and many new extremal Type II $\mathbb{Z}_4$-codes of lengths $32$ and $40$ have been constructed (see \cite{Had}, \cite{40doubling}, \cite{Chan}). Moreover, it is shown in \cite{40doubling} that if $\widetilde{C}$ is a $\mathbb{Z}_4$-code obtained from a Type II $\mathbb{Z}_4$-code $C$ by the doubling method, then the minimum weight of the residue code of $\widetilde{C}$ is greater than or equal to the minimum weight of the residue code of $C$. Therefore, good binary codes can be obtained as residue codes of Type II $\mathbb{Z}_4$-codes constructed by the doubling method.

In this paper we observe the application of the doubling method in a construction of extremal Type II $\mathbb{Z}_4$-codes for the next case, i.e., for lengths $48, 56$ and $64.$ A small number of extremal Type II $\mathbb{Z}_4$-codes for these lengths is known and all of them are of type $4^{\frac{n}{2}}$ or $4^{\frac{n}{4}}2^{\frac{n}{2}}$, where $n$ is the length of a code. There are two known inequivalent extremal Type II $\mathbb{Z}_4$-codes of length $48$ (\cite{BonnecazeSoleBachoc}, \cite{HaradaKitazume}) and  both codes are of type $4^{24}.$ Up to equivalence, there are six extremal Type II $\mathbb{Z}_4$-codes of length $56$, one of which is of type $4^{14}2^{28}$ (\cite {5664}) and the remaining five codes are of type $4^{28}$ (\cite{Hnovi}, \cite{HaradaLee}).
Finally, up to equivalence, one extremal Type II $\mathbb{Z}_4$-code of length $64$ and type $4^{16}2^{32}$ is known (\cite{5664}), and  recently five inequivalent extremal Type II $\mathbb{Z}_4$-codes of length $64$ and type $4^{32}$ have been constructed in \cite{Hnovi}.

Based on the doubling method for constructing Type II $\mathbb{Z}_4$-codes, we developed a method for constructing new extremal Type II $\mathbb{Z}_4$-codes starting from an extremal Type II $\mathbb{Z}_4$-code of type $4^k$ with an extremal residue code and length $48, 56$ or $64$. We constructed new extremal Type II $\mathbb{Z}_4$-codes of length $64$ and type $4^{31}2^2$. As mentioned earlier, such codes were not known before. Moreover, their residue codes are $[64,31,12]$ binary codes, which are not equivalent to the previously known codes. According to \cite{codetables}, codes with these parameters are the best known $[64,31]$ binary codes, since the codes that reach a theoretical upper bound on the minimum weight are not known.

The paper is organized as follows.  Section \ref{pre} gives definitions and basic properties of binary linear codes, codes over $\mathbb{Z}_4$ and lattices, that will be needed in our work.
In Section \ref{constr}, we present the method to construct new extremal Type II $\mathbb{Z}_4$-codes starting from an extremal Type II $\mathbb{Z}_4$-code of type $4^k$ with an extremal residue code and length $48, 56$ or $64$.
Finally, in the last section, we present computational results. We give the generator matrices for three new  extremal Type II $\mathbb{Z}_4$-codes of length $64$ as well as the weight distributions for the corresponding residue codes. For one of the constructed extremal $\mathbb{Z}_4$-codes, we obtained self-orthogonal $1$-designs from minimum weight codewords of the residue code and the torsion code.

In this work, we have used computer algebra systems GAP \cite{cite_key1Gap} and Magma \cite{magma}.

\section  {Preliminaries} \label{pre}

We assume that the reader is familiar with the basic facts of coding theory. We refer the reader to \cite{FEC} in relation to terms not defined in this paper.\\

Let $\mathbb{F}_q$ be the field of order $q$, where $q$ is a prime power.
A {\it code} $C$ over $\mathbb{F}_q$ of length $n$ is any subset of $\mathbb{F}_q^n$. A $k$-dimensional subspace of $\mathbb{F}_q^n$ is called an $[n,k]$ {\it $q$-ary linear code}. 
An element of a code is called a {\it codeword}. A {\it generator matrix} for an $[n, k]$ code $C$ is any
$k \times n$ matrix whose rows form a basis for $C.$

If $q=2$, then the code is called {\it binary}.
The {\it (Hamming) weight} of a codeword $x\in \mathbb{F}_2^n$ is the number of nonzero coordinates in $x.$ If the minimum weight $d$ of an $[n,k]$ binary linear code is known, then the code is called a $[n,k,d]$ binary linear code.
Binary linear codes in which all codewords have weights divisible by four are called {\it doubly even}.
 Denote by $W_i$ the number of codewords with weight $i$ in a binary code of length $n$. The {\it weight distribution} of a binary code $C$ is the set
$$\{\left\langle i,W_i\right\rangle \ |\ i=0,\dots, n, W_i\neq 0\}.$$

Let $C$ be a binary linear code of length $n$. The {\it dual code} $C^\bot$ of $C$ is defined as
$$C^\bot=\{x\in \mathbb{F}_2^n\ |\ \left\langle x, y\right\rangle=0\ \text{for all}\ y\in C\},$$
where $\left\langle x,\\ y\right\rangle=x_1y_1+x_2y_2+\dots+x_ny_n$ for 
$x=(x_1,x_2,\dots,x_n)$ and $y=(y_1,y_2,\dots,y_n).$ The code $C$ is {\it self-orthogonal} if $C\subseteq C^\bot$, and it is {\it self-dual} if $C = C^\bot$.

A self-dual doubly even binary code is called a {\it Type II binary code}.
If $C$ is a Type II binary code of length $n$ and minimum weight $d,$ then $d\leq 4\left\lfloor \frac{n}{24}\right\rfloor+4$ (see \cite{macwilliams}).
If $C$ is a Type II binary code of length $n$ and  minimum weight  $d=4\left\lfloor \frac{n}{24}\right\rfloor+4,$ we say that $C$ is {\it extremal}.

Let $\mathbb{Z}_4$ be the ring of integers modulo $4.$ A {\it linear code} $C$ of length $n$ over $\mathbb{Z}_4$ (i.e., a {\it $\mathbb{Z}_4$-code}) is a $\mathbb{Z}_4$-submodule of $\mathbb{Z}_4^n.$
Two $\mathbb{Z}_4$-codes are {\it (monomially) equivalent} if one can be obtained from the other by permuting the coordinates and (if necessary) changing the signs of certain coordinates. Codes that differ  only by a permutation of coordinates are called {\it permutation equivalent}.
The {\it support} of a codeword $x\in \mathbb{Z}_4^n$ is the set of nonzero positions in $x.$
Denote the number of coordinates $i$ (where $i=0,1,2,3$) in a codeword $x\in \mathbb{Z}_4^n$  by $n_i(x).$
The codeword $x\in \mathbb{Z}_4^n$ is {\it even} if $n_1(x)=n_3(x)=0$.
The {\it Euclidean weight} of $x$ is $wt_E (x) = n_1(x) + 4n_2(x) + n_3(x).$  
We denote by $d_E(C)$ the minimum Euclidean weight of the code $C.$  

Let $C$ be a $\mathbb{Z}_4$-code of length $n$. The {\it dual code} $C^\bot$ of the code $C$ is defined as
\[
C^\bot=\{x\in \mathbb{Z}_4^n\,|\,\left\langle x, y\right\rangle=0\ \text{for all}\ y\in C\},
\]
where $\left\langle x,\\ y\right\rangle=x_1y_1+x_2y_2+\dots+x_ny_n\ (\text{mod}\ 4)$ for
$x=(x_1,x_2,\dots,x_n)$ and $y=(y_1,y_2,\dots,y_n).$ The code $C$ is {\it self-orthogonal} if $C\subseteq C^\bot$ and {\it self-dual} if $C= C^\bot.$

{\it Type II $\mathbb{Z}_4$-codes} are self-dual $\mathbb{Z}_4$-codes which have the property that all 
Euclidean weights are divisible by eight. 
If $C$ is a Type II $\mathbb{Z}_4$-code of length $n,$ then  $n\equiv 0\ (\text{mod}\ 8)$ and $d_E(C)\leq 8\left\lfloor \frac{n}{24}\right\rfloor+8$ (see \cite{survey}). A Type II $\mathbb{Z}_4$-code $C$  length $n$ is {\it extremal} if $d_E(C)=8\left\lfloor \frac{n}{24}\right\rfloor+8$.\\

Each $\mathbb{Z}_4$-code $C$ contains a set of $k_1+k_2$ codewords $\{c_1,c_2,\dots,c_{k_1},c_{k_1+1}, \dots,$ $c_{k_1+k_2}\}$ such that every codeword in $C$ is uniquely expressible in the form
\[
\sum_{i=1}^{k_1}a_ic_i+\sum_{i=k_1+1}^{k_1+k_2}a_ic_i,
\]
where $a_i\in \mathbb{Z}_4$ and $c_i$ has at least one coordinate equal to 1 or 3, for $1\leq i\leq k_1,$
$a_i\in \mathbb{Z}_2$  and $c_i$ is even, for $k_1+1\leq i\leq k_1+k_2.$
We say that $C$ is of {\it type} $4^{k_1}2^{k_2}.$
The matrix whose rows are  $c_i,\ 1\leq i\leq k_1+k_2,$ is called a {\it generator matrix} for $C.$ A generator matrix $G$ of a $\mathbb{Z}_4$-code $C$ is in {\it standard form} if

\begin{equation*}\label{std}
G=\left[\begin{tabular}{ccc}
$I_{k_1}$&$A$&$B_1+2B_2$\\
$O$&$2I_{k_2}$&$2D$\\
\end{tabular}\right],
\end{equation*}
where $A, B_1,B_2$ and $D$ are matrices with entries from $\{0,1\}$, $O$ is the $k_2\times k_1$ null matrix,
and $I_{m}$ denotes the identity matrix of order $m$.
Each $\mathbb{Z}_4$-code is permutation equivalent to a code with generator matrix in standard form.

Let $C$ be a $\mathbb{Z}_4$-code of length $n.$  
There are two binary linear codes of length $n$ associated with $C$, namely the binary  code
$C^{(1)}=\{c\ (\text{mod}\ 2)\,|\,c\in C \}$, which is called the {\it residue code} of $C$, and the binary code $C^{(2)}=\{c\in \mathbb{Z}_2^n\,|\,2c\in C \}$, 
which is called the {\it torsion code} of $C$. 
If $C$ is a $\mathbb{Z}_4$-code of type $4^{k_1}2^{k_2}$ with a generator matrix in standard form, then $C^{(1)}$ is a binary code of dimension $k_1$ generated by the matrix
\[
\left[\begin{tabular}{ccc}
$I_{k_1}$&$A$&$B_1$\\
\end{tabular}\right].
\]
If $C$ is a self-dual $\mathbb{Z}_4$-code, then $C^{(1)}$ is doubly even and $C^{(1)}={C^{(2)}}^\bot$ (see \cite{ConS}).

Let $\{v_1,\dots,v_n\}$ be a basis of $\mathbb{R}^n.$ A {\it lattice} $\Lambda$ is a set
$$\Lambda=\{z_1v_1+\dots+z_nv_n \ |\ z_i\in \mathbb{Z}, 1\leq i\leq n\}.$$

Let $\Lambda\subseteq \mathbb{R}^n$ be a lattice.
If $\left\langle x,y\right\rangle\in \mathbb{Z},$ for all $x,y\in \Lambda,$ where
$\left\langle x,y\right\rangle=x_1y_1+\dots +x_ny_n,\ x=(x_1\dots,x_n),\ y=(y_1,\dots,y_n),$ $\Lambda$ is an {\it integral} lattice.
An integral lattice $\Lambda$ is {\it even} if $\left\langle x,x\right\rangle$ is an even number for all $x\in \Lambda.$

If $\Lambda\subseteq \mathbb{R}^n$ is a lattice, then its {\it dual lattice} is 
$$\Lambda^*=\{y\in \mathbb{R}^n  \ |\ \left\langle x,y\right\rangle\in \mathbb{Z}\ \text{for all}\ x\in \Lambda \}.$$ 
An integral lattice $\Lambda$ is {\it unimodular} if $\Lambda=\Lambda^*.$

 The {\it minimum norm} $\mu$ of a lattice $\Lambda$ is
$$\mu=\min \{ \left\langle x,x\right\rangle \ |\ x\in \Lambda, x\neq (0,\dots,0)\}.$$
The {\it kissing number} of $\Lambda$ is the number of elements of $\Lambda$ with minimum norm $\mu.$

For a $\mathbb{Z}_4$-code $C$ of length $n,$ define the following lattice
$$A_4 (C)=\{x\in \mathbb{R}^n  \ |\ 2x\ \text{(mod 4)} \in C \}.$$

$A_4 (C)$ is an even unimodular lattice if and only if $C$ is a Type II $\mathbb{Z}_4$-code (\cite{FEC}, p. 504).\\

To determine the extremality of Type II $\mathbb{Z}_4$-codes of lengths $48, 56$ and $64$, one can use the following lemma. This result provides a faster computational option than computing the Euclidean weight distribution in Magma \cite{magma}.

\begin{lem}[\cite{Hnovi}, Lemma 6.1.] \label{res}
Suppose that $n\in \{48, 56, 64\}.$ Let $C$ be a Type II $\mathbb{Z}_4$-code of
length $n.$ Then
$C$ is extremal if and only if $A_4(C)$ has minimum norm $4$ and kissing number
$2n.$
\end{lem}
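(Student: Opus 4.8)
The plan is to pass from the code to the even unimodular lattice $A_4(C)$ (even and unimodular precisely because $C$ is Type II) and to compare, coordinate by coordinate, the Euclidean weight of a codeword with the squared norm of its lifts. Note that $x\in A_4(C)$ exactly when $c:=2x\ (\mathrm{mod}\ 4)\in C$, and that the lifts of a fixed $c\in C$ are the vectors $x$ with $2x_i\equiv c_i\ (\mathrm{mod}\ 4)$ for every $i$. First I would record the coordinatewise minimum of $x_i^2$ over admissible lifts: it is $0$ if $c_i=0$ (attained only at $x_i=0$); it is $\tfrac14$ if $c_i\in\{1,3\}$ (attained only at $x_i=\pm\tfrac12$); and it is $1$ if $c_i=2$ (attained at $x_i=\pm1$, i.e.\ two choices). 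Summing over coordinates, the smallest norm of a lift of a nonzero $c$ equals $\tfrac14\,wt_E(c)$, it is attained by exactly $2^{n_2(c)}$ lifts, and any lift of $c$ not of this minimal form has norm at least $\tfrac14\,wt_E(c)+2$. Separately, the lifts of $c=0$ are exactly the vectors in $2\mathbb{Z}^n$, whose nonzero norms are positive multiples of $4$; the value $4$ is attained precisely by the $2n$ vectors $\pm 2e_i$ with one coordinate $\pm 2$ and the rest $0$.

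From this the minimum norm is immediate: using on one side either a minimal lift of a minimum-weight codeword or one of the $\pm 2e_i$, and on the other the dichotomy ``$c=0$ or $c\neq 0$'', one gets $\mu(A_4(C))=\min\{4,\tfrac14 d_E(C)\}$. Since $C$ is a nonzero Type II code, $d_E(C)$ is a positive multiple of $8$, so $\mu(A_4(C))=2$ when $d_E(C)=8$ and $\mu(A_4(C))=4$ as soon as $d_E(C)\geq 16$; equivalently, $\mu(A_4(C))=4$ iff $d_E(C)\geq 16$. Here I use the hypothesis $n\in\{48,56,64\}$ only through $\lfloor n/24\rfloor=2$, which by the bound recalled in Section~\ref{pre} forces $d_E(C)\in\{8,16,24\}$; hence $C$ is extremal iff $d_E(C)\geq 16$ and $C$ has no codeword of Euclidean weight $16$.

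It then remains to count norm-$4$ vectors assuming $\mu(A_4(C))=4$, i.e.\ $d_E(C)\geq 16$. A norm-$4$ vector over $c=0$ is one of the $2n$ vectors $\pm 2e_i$. A norm-$4$ vector over a nonzero $c$ requires $\tfrac14\,wt_E(c)\leq 4$, hence $wt_E(c)=16$ by our assumption, and then it must be one of the $2^{n_2(c)}$ minimal lifts of $c$ (non-minimal lifts of such a $c$ have norm $\geq 6$, and codewords of weight $\geq 24$ have all lifts of norm $\geq 6$). Consequently the kissing number of $A_4(C)$ equals $2n+\sum_{c\in C,\, wt_E(c)=16}2^{n_2(c)}$, and since each term $2^{n_2(c)}\geq 1$, it equals $2n$ precisely when $C$ has no codeword of Euclidean weight $16$. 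Combined with the previous paragraph this gives the stated equivalence: $C$ extremal $\iff$ $d_E(C)\geq 16$ and no weight-$16$ codeword $\iff$ $\mu(A_4(C))=4$ and the kissing number is $2n$. I expect the only genuinely delicate point to be the coordinatewise bookkeeping in the first paragraph — in particular, verifying that a weight-$16$ codeword always contributes at least one norm-$4$ lift and that, once weight-$8$ codewords are excluded, neither a higher-weight codeword nor a non-minimal lift can produce a vector of norm $4$; after that the conclusion is forced.
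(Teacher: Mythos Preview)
Your argument is correct. The coordinatewise analysis of lifts is exactly the right mechanism: for each $c\in C$ the minimal-norm preimage in $A_4(C)$ has squared length $\tfrac14\,wt_E(c)$, attained by $2^{n_2(c)}$ vectors, and any non-minimal lift gains at least $2$ in norm. From this, together with the $2n$ vectors $\pm 2e_i$ lying over $c=0$, both the minimum-norm formula $\mu=\min\{4,\tfrac14 d_E(C)\}$ and the kissing-number count $2n+\sum_{wt_E(c)=16}2^{n_2(c)}$ follow, and the specialization to $n\in\{48,56,64\}$ (so that extremality means $d_E(C)=24$) finishes the equivalence.

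As for comparison with the paper: there is nothing to compare against. The lemma is quoted verbatim from \cite{Hnovi} (Lemma~6.1 there) and is not proved in the present paper; it is used only as a computational shortcut via Magma. Your write-up therefore supplies what the paper deliberately omits, and it does so by the standard Construction~$A_4$ argument one finds (in various guises) in the literature on self-dual $\mathbb{Z}_4$-codes and unimodular lattices. One cosmetic remark: when you say the minimum at $c_i\in\{1,3\}$ is ``attained only at $x_i=\pm\tfrac12$'', it would be slightly cleaner to note that exactly one sign occurs for each of $c_i=1$ and $c_i=3$ (namely $x_i=\tfrac12$ for $c_i=1$ and $x_i=-\tfrac12$ for $c_i=3$), which is precisely why those coordinates contribute a single choice to the count $2^{n_2(c)}$.
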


\section{Method of construction} \label{constr}

In \cite{Chan}, the doubling method for a construction of Type II $\mathbb{Z}_4$-codes is introduced.

\begin{thm}[\cite{Chan}, Doubling method]\label{TmDoubling}
Let $C$ be a Type II $\mathbb{Z}_4$-code of length $n.$ 
Let $2u\in \mathbb{Z}_4^{n}$ be an even codeword. Suppose $2u \notin C$ and $2u$ has an even number of $2'$s in its coordinates.
Let $C_0=\{v\in C\ |\ \left\langle 2u,v\right\rangle=0\}.$
Then $\widetilde{C}=C_0 \oplus \left\langle 2u\right\rangle$ is a Type II $\mathbb{Z}_4$-code.
\end{thm}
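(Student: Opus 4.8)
The plan is to verify directly the three defining properties of a Type II $\mathbb{Z}_4$-code for $\widetilde{C} = C_0 \oplus \langle 2u\rangle$: that it is self-orthogonal, that it is self-dual (i.e., has the right size), and that every Euclidean weight is divisible by eight. First I would record basic facts about $C_0 = \{v \in C \mid \langle 2u, v\rangle = 0\}$. Since $\langle 2u, \cdot\rangle$ is a $\mathbb{Z}_4$-linear functional on $C$ taking values in $\{0,2\}$ (because $2u$ has all coordinates in $\{0,2\}$), $C_0$ is a $\mathbb{Z}_4$-submodule of $C$ of index $2$; note $2u \notin C$ guarantees $2u \notin C_0$, so the sum $C_0 \oplus \langle 2u\rangle$ is genuinely of size $2|C_0| = |C|$, which equals $4^{n/2}$-type cardinality — the same as $C$. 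This will handle the counting needed for self-duality once self-orthogonality is established.

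Next I would check self-orthogonality of $\widetilde{C}$. A general element is $v + 2\epsilon u$ with $v \in C_0$, $\epsilon \in \{0,1\}$; I need $\langle v + 2\epsilon u, v' + 2\epsilon' u\rangle = 0$ for all such pairs. Expanding, this is $\langle v, v'\rangle + 2\epsilon'\langle v, u\rangle + 2\epsilon\langle u, v'\rangle + 4\epsilon\epsilon'\langle u, u\rangle$. The first term vanishes since $C \subseteq C^\bot$. The last term vanishes mod $4$. For the cross terms, observe $2\langle v, u\rangle = \langle v, 2u\rangle = 0$ because $v \in C_0$, and similarly $2\langle u, v'\rangle = \langle 2u, v'\rangle = 0$. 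Hence $\widetilde{C}$ is self-orthogonal, and combined with $|\widetilde{C}| = |C| = $ the self-dual cardinality, $\widetilde{C}$ is self-dual. (One should also note $\langle 2u, 2u\rangle = 4\langle u,u\rangle \equiv 0$, consistent with $2u$ being isotropic, which is needed for $\langle 2u\rangle$ to lie in $\widetilde{C}$ self-orthogonally.)

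The substantive part is showing every Euclidean weight in $\widetilde{C}$ is divisible by eight; this is where I expect the hypothesis that $2u$ has an even number of $2$'s to be used. Codewords of the form $v \in C_0 \subseteq C$ already have $wt_E$ divisible by $8$ since $C$ is Type II. The new codewords are $v + 2u$ for $v \in C_0$. Here I would use a congruence for the Euclidean weight of a sum: working coordinatewise, adding $2$ to a coordinate of $v$ either changes a $0$ to a $2$ (adding $4$ to $wt_E$), a $2$ to a $0$ (subtracting $4$), a $1$ to a $3$, or a $3$ to a $1$ (each leaving $wt_E$ unchanged, since $1^2 \equiv 3^2$ contribute $1$ each in the Euclidean sense — actually $n_1+n_3$ is preserved). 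So $wt_E(v+2u) - wt_E(v) \in 4\mathbb{Z}$, and more precisely equals $4(\#\{\text{coords where } u_i \ne 0, v_i \in \{0,2\}, \text{sign pattern}\})$; the key point is that modulo $8$, $wt_E(v+2u) \equiv wt_E(v) + 4 N \pmod 8$ where $N$ counts the support positions of $2u$ where $v_i$ is even, but one must be careful. The cleaner route: use the standard identity $wt_E(x+y) \equiv wt_E(x) + wt_E(y) + 2\langle x, y\rangle \pmod 8$ for $x,y \in \mathbb{Z}_4^n$ (valid because $4x_iy_i$ terms aside, $wt_E$ mod $8$ is quadratic), applied with $x = v$, $y = 2u$: then $wt_E(v + 2u) \equiv wt_E(v) + wt_E(2u) + 2\langle v, 2u\rangle \pmod 8$. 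Now $2\langle v, 2u\rangle = 0$ since $\langle v, 2u\rangle \in \{0, 2\}$ actually gives $2\cdot 2 = 4$ — wait, $\langle v, 2u \rangle = 0$ exactly because $v \in C_0$, so this term is $0$. And $wt_E(2u) = 4 n_2(2u) = 4 \cdot (\text{number of } 2\text{'s in } 2u)$, which is divisible by $8$ precisely because that number is even. Since $wt_E(v) \equiv 0 \pmod 8$, we conclude $wt_E(v+2u) \equiv 0 \pmod 8$. The main obstacle — and the step requiring genuine care — is establishing the mod-$8$ quadratic identity for $wt_E$ rigorously; I would prove it by reducing to a single coordinate and checking the finitely many cases in $\mathbb{Z}_4 \times \mathbb{Z}_4$, then summing, with attention to the cross terms $2 x_i y_i$ that accumulate. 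Finally, I would also remark that $\widetilde{C}$ has the correct type $4^{k_1}2^{k_2}$ with $k_1$ one less and $k_2$ two more than $C$, though for the statement as phrased only "Type II" is required.
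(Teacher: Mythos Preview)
The paper does not give its own proof of this theorem; it is quoted from Chan's thesis (the statement is labelled ``[\textit{Chan}, Doubling method]'' and no proof environment follows it). So there is nothing in the paper to compare your argument against, and your proposal stands as an independent verification.

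Your argument is essentially correct and follows the natural route: count, check self-orthogonality, then check the Euclidean weights via the quadratic identity $wt_E(x+y)\equiv wt_E(x)+wt_E(y)+2\langle x,y\rangle\pmod 8$. Two small points are worth tightening. First, when you assert that $C_0$ has index~$2$ in $C$, the justification is not merely that the functional $\langle 2u,\cdot\rangle$ takes values in $\{0,2\}$---that only gives index $1$ or $2$---but that it is not identically zero on $C$; this is exactly where $2u\notin C=C^\bot$ enters (there exists $v\in C$ with $\langle 2u,v\rangle\ne 0$). You invoke $2u\notin C$ only to get $2u\notin C_0$ and hence directness of the sum, but it is doing double duty. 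Second, the mod-$8$ identity is indeed best established by the integer-lift argument you sketch: for $a\in\{0,1,2,3\}$ one has $wt_E(a)\equiv a^2\pmod 8$, and reduction mod $4$ in the sum costs a multiple of $8$ after squaring; summing over coordinates gives the formula with $2\langle x,y\rangle$ well-defined mod $8$. With those two clarifications your proof is complete; the stream-of-consciousness ``wait'' passages should of course be cleaned up in a final write-up.
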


Let $C$ be a Type II $\mathbb{Z}_4$-code of length $n$ and type $4^k.$ Then
the choice of $2u$ in the previous theorem can be restricted to codewords with $0'$s on the first $k$ coordinates (see [\cite{Chan}, Theorem 4]).\\

In the sequel, we consider a  construction of extremal Type II $\mathbb{Z}_4$-codes of length $n \in \{48,56,64\}$ and type $4^k$ which have an extremal residue code, using the doubling method.

\begin{thm}\label{doubling4k485664}
Let $n \in \{48,56,64\}$. Denote by $S_i(w)$ the set of positions with element $i\in\mathbb{Z}_4$ in $w\in \mathbb{Z}_4^n.$ Let $C$ be an extremal Type II $\mathbb{Z}_4$-code of length $n$ and type $4^{k}$ where $C^{(1)}$ is extremal.
Suppose $2u\in\mathbb{Z}_4^{n}$ such that $S_2(2u)\subseteq \{k+1,\dots,n\},$ where $|S_2(2u)|\geq 6$ is even. 
If there is no codeword $v$ of $C$ that satisfies any of the following conditions:
\begin{enumerate}
\item $S_2(v)\subseteq S_2(2u)\subseteq S_1(v)\cup S_2(v) \cup S_3(v)$ and $|S_1(v)\cup S_3(v)|=16,$
\item $|S_2(2u)\setminus (S_1(v)\cup S_2(v)\cup S_3(v))|+|S_2(v)\setminus S_2(2u)|=1$ and $|S_1(v)\cup S_3(v)|=12,$
\item $|S_2(2u)\setminus S_2(v)|+|S_2(v)\setminus S_2(2u)|=2$ or $4$ and $|S_1(v)\cup S_3(v)|=0,$
\end{enumerate}
then the Type II $\mathbb{Z}_4$-code $\widetilde{C}$ generated by $2u$ and $C$ using the doubling method is extremal. These choices of $2u$ are the only candidates for the code $C$ in the doubling method which lead to an extremal code.
\end{thm}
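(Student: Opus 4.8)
The plan is to characterize extremality of $\widetilde C$ combinatorially. Since $n\in\{48,56,64\}$ gives $8\lfloor n/24\rfloor+8=24$, the definition of extremality reads $d_E(\widetilde C)=24$; as $d_E(\widetilde C)\le 24$ always holds for a Type~II code and all its Euclidean weights are divisible by $8$, this is equivalent to $\widetilde C$ having no codeword of Euclidean weight $8$ or $16$ (equivalently, one may invoke Lemma~\ref{res}). By Theorem~\ref{TmDoubling}, $\widetilde C=C_0\oplus\langle 2u\rangle$ is Type~II, and since $2u\notin C$ its nonzero codewords are precisely the elements of $C_0\setminus\{0\}$ together with the elements of the coset $C_0+2u$. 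Codewords of the first kind lie in $C$, so they have Euclidean weight at least $d_E(C)=24$; and $2u$ itself has Euclidean weight $4\,|S_2(2u)|\ge 24$. Hence everything reduces to bounding $wt_E(v+2u)$ from below by $24$ for $v\in C_0\setminus\{0\}$.

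The heart of the argument is a weight formula. Set $T=S_2(2u)$ and classify each coordinate of $v$ by its value and by whether it lies in $T$. Using that adding $2$ modulo $4$ interchanges $0\leftrightarrow 2$ and $1\leftrightarrow 3$ while preserving parity, a direct coordinate count gives
\[
wt_E(v+2u)\;=\;e+4a_0+4b_2,\qquad e=|S_1(v)\cup S_3(v)|,\quad a_0=|S_0(v)\cap T|,\quad b_2=|S_2(v)\setminus T|.
\]
Here $e$ is the Hamming weight of $v\bmod 2\in C^{(1)}$, and since $C^{(1)}$ is an extremal Type~II binary code of length $n$ its weights are divisible by $4$ and its minimum nonzero weight is $12$, so $e\in\{0\}\cup\{12,16,20,\dots\}$. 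Imposing $wt_E(v+2u)\in\{8,16\}$ and running through these values of $e$ leaves exactly three possibilities: (i) $e=16$ and $a_0=b_2=0$; (ii) $e=12$ and $a_0+b_2=1$; (iii) $e=0$ and $a_0+b_2\in\{2,4\}$. Rewriting $a_0$ and $b_2$ in set language --- using $S_1(v)\cup S_2(v)\cup S_3(v)=\operatorname{supp}(v)$, that $\operatorname{supp}(v)=S_2(v)$ when $v$ is even, and that $v=2u$ is excluded because $2u\notin C$ --- identifies (i), (ii), (iii) with conditions~1, 2, 3 of the statement.

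It remains to see that any $v\in C$ satisfying one of conditions~1--3 automatically lies in $C_0$, so that $v+2u$ is indeed a codeword of $\widetilde C$. Membership $v\in C_0$ amounts to $|S_1(v)\cap T|+|S_3(v)\cap T|$ being even, and since $C$ is Type~II we have $wt_E(v)=e+4n_2(v)\equiv 0\pmod{8}$, which pins down the parity of $n_2(v)$ from $e$; combining this with the elementary relations among $a_0,a_1,a_2,a_3,|T|$ and $n_2(v)$ that hold in each of cases (i)--(iii) --- case (iii) being immediate since even codewords always lie in $C_0$ --- forces the required parity. Assembling the pieces, $\widetilde C$ fails to be extremal if and only if some $v+2u$ with $v\in C_0\setminus\{0\}$ has Euclidean weight $8$ or $16$, which by the above is equivalent to the existence of a codeword $v\in C$ satisfying one of conditions~1--3; this bi-implication contains both the stated sufficient condition and the closing claim that these $2u$ are the only candidates producing an extremal code. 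I expect the main obstacle to be this middle part: carefully excluding every $(e,a_0,b_2)$ other than (i)--(iii), matching each numerical case faithfully to the exact set-theoretic shape of conditions~1--3, and carrying out the parity bookkeeping that places $v$ in $C_0$.
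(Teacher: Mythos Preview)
Your proposal is correct and follows essentially the same approach as the paper: derive the weight formula $wt_E(v+2u)=|S_1(v)\cup S_3(v)|+4(|S_0(v)\cap S_2(2u)|+|S_2(v)\cap S_0(2u)|)$, use extremality of $C^{(1)}$ to restrict $e=|S_1(v)\cup S_3(v)|$ to $\{0,12,16,20,\dots\}$, and match the three surviving cases to conditions~1--3. You are in fact more careful than the paper on the converse (``only candidates'') clause, explicitly verifying via the parity computation that any $v\in C$ satisfying one of conditions~1--3 necessarily lies in $C_0$, so that $v+2u$ is genuinely a low-weight codeword of~$\widetilde C$; the paper's proof passes over this point.
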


\begin{proof}  

Let us assume that the code $\widetilde{C}$ is not extremal. Then it contains a codeword of Euclidean weight $8$ or $16$ of the form  $w=v+2u,$ where $v\in C$ is such a codeword that $\left\langle v,2u\right\rangle=0,$ and 
$$wt_E(w)=\left|S_1(v)\cup S_3(v)\right|+4\left(\left|S_2(v)\cap S_0(2u)\right|+\left|S_0(v)\cap S_2(2u)\right|\right).$$
There are three cases to consider.\\
Case 1: $\left|S_1(v)\cup S_3(v)\right|>16.$\\
Then $wt_E(w)\geq 24.$\\
Case 2: $\left|S_1(v)\cup S_3(v)\right|=16.$\\
For $wt_E(w)$ to be equal to $16,$ $S_2(v)\cap S_0(2u)$ and $ S_0(v)\cap S_2(2u)$ must be empty sets, which is impossible because of the first condition.\\
Case 3: $\left|S_1(v)\cup S_3(v)\right|<16.$\\
If $\left|S_1(v)\cup S_3(v)\right|=12,$ then, for $wt_E(w)$ to be $16,$ we have $\left(\left| S_2(v)\cap S_0(2u)\right|,\left| S_0(v)\cap S_2(2u)\right|\right)\in \{(0,1),(1,0)\},$ which is impossible because of the second condition.\\
If $v$ is an even codeword, then $wt_E(w)=4\left(\left|S_2(v)\cap S_0(2u)\right|+\left|S_0(v)\cap S_2(2u)\right|\right).$ For $wt_E(w)$ to be $8$ or $16,$ we have $|S_2(2u)\setminus S_2(v)|+|S_2(v)\setminus S_2(2u)|=2$ or $4,$ which is impossible because of the third condition.

The resulting choices for $2u$  are the only candidates for the code $C$ in the doubling method, since the conditions of the theorem exclude all choices that lead to a code $\widetilde{C}$ which is not extremal.
\end{proof}

For an extremal Type II $\mathbb{Z}_4$-code $C,$ the next algorithm returns all unsuitable candidates $2u$,  i.e., the candidates for which  the application of the doubling method leads to a Type II $\mathbb{Z}_4$-code $\widetilde{C}$ which is not extremal. Thus, performing the given steps will find all possible candidates $2u$ for code $C$ to produce a new extremal Type II $\mathbb{Z}_4$-code $\widetilde{C}$ by the doubling method.\\

 {\bf Algorithm B} \\\label{doubling4k485664alg}
Let $n \in \{48, 56, 64\}$  and let $C$ be an extremal Type II $\mathbb{Z}_4$-code of length $n$ and type $4^{k},$ where $C^{(1)}$ is extremal, with the generator matrix 
$G=\left[\begin{tabular}{cc}
$I_{k}$&$A$
\end{tabular}\right]$ in the standard form.

\begin{itemize}
\item[1.] Let $G^{(1)}=\left[I_{k}\ \ A^1\right]$ be a generator matrix of $C^{(1)}.$
\item[2.] For each $F\subseteq \{1,2,\dots,k\}$ such that $1\leq |F|\leq 16$ we do the following.
\item[2.1.] Find the sum $s^1_F$ of rows in $G^{(1)}$ with row indices in $F$. 
\item[2.2.] If $|S_1(s^1_F)|=16,$  we repeat the following steps on all $E\subseteq F.$
\begin{itemize}
\item[2.2.1.] Calculate $v=s_F+2s_E,$ where $s_F$ is the sum of rows in the generator matrix $G$ of $C$ with row indices in $F$ and $s_E$ is the sum of rows in $G$ with row indices in $E.$
\item[2.2.2.] Let $$T_v=S_2(v)\cap \{k+1,k+2,\dots,n\},$$ $$O_v=(S_1(v)\cup S_3(v))\cap \{k+1,k+2,\dots,n\}.$$
\item[2.2.3.] Let $\mathcal{B}$ be the collection of all sets  $$B=T_v\cup O,\ O\subseteq O_v,$$ where $|B|$ is an even number.
 \end{itemize}
\item[2.3.] If $|S_1(s_F^1)|=12,$ we repeat the following steps on all $E\subseteq F.$
\begin{itemize}
\item[2.3.1.] Evaluate $v=s_F+2s_E$ where $s_F$ is the sum of rows in $G$ with row indices in $F$ and $s_E$ is the sum of rows in $G$ with row indices in $E.$
\item[2.3.2.] Let $T_v=S_2(v)\cap \{k+1,\dots,n\}$ and $O_v=\left(S_1(v)\cup S_3(v)\right)\cap \{k+1,\dots,n\}.$
\item[2.3.3.] Consider all $B\subseteq \{k+1,\dots,n\}$ such that $T_v\subseteq B$ and $|B\setminus \left(T_v\cup O_v\right)|=1.$ Include all sets $B$ in $\mathcal{B}.$
\item[2.3.4.] Let $B\subseteq \left(T_v\cup O_v\right)$ such that $|T_v\setminus B|=1.$ Include all such sets in $\mathcal{B}.$
\item[2.3.5.] Evaluate $v_i=v+2G_i$ for all $i \in \{1,\dots,k\}\setminus F.$
\item[2.3.6.] Let $T_{v_i}=S_2(v_i)\cap \{k+1,\dots,n\}$ and $O_{v_i}=\left(S_1(v_i)\cup S_3(v_i)\right)\cap \{k+1,\dots,n\}.$
\item[2.3.7] Include all sets $T_{v_i}\cup O, O\subseteq O_{v_i}$ in $\mathcal{B}.$
\end{itemize}
\item[3.] For all $i\in \{1,\dots,k\},$ we do the following.
\begin{itemize}
\item[3.1.] Let $O_i=S_2(2G_i)\cap \{k+1,\dots,n\},$ where $G_i$ is the $i$-th row of $G.$
\item[3.2.] Include all  $B\subseteq O_i$ such that $|O_i|-|B|=1$ or $|O_i|-|B|=3$ in $\mathcal{B}.$
\item[3.3.] Include all $B=O_i\cup \{m\}$ such that $m\in \{k+1,\dots,n\}\setminus O_i$ in $\mathcal{B}.$
\item[3.4.] Include all sets $B=O_i\cup \{p,q,r\}$ for every $\{p,q,r\}\subseteq \{k+1,\dots,n\}\setminus O_i$ in $\mathcal{B}.$
\end{itemize}
\item[4.] For every $\{i,j\}\subseteq \{1,\dots,k\},$ we do the following.
\begin{itemize}
\item[4.1.] Let $v_{i,j}=2G_i+2G_j$ and $O_{i,j}=S_2(v_{i,j})\cap \{k+1,\dots,n\}.$ 
\item[4.2.] Include all $O_{i,j}$ in $\mathcal{B}.$
\item[4.3.] Include all $B=O_{i,j}\cup \{p,q\}$ for every $\{p,q\}\subseteq \{k+1,\dots,n\}\setminus O_{i,j}$ in $\mathcal{B}.$
\end{itemize}
\item[5.] For every $\{i,j,l\}\subseteq \{1,\dots,k\},$ we do the following.
\begin{itemize}
\item[5.1.] Let $v_{i,j,l}=2G_i+2G_j+2G_l$ and $O_{i,j,l}=S_2(v_{i,j,l})\cap \{k+1,\dots,n\}.$ 
\item[5.2.] Include all $B\subseteq O_{i,j,l}$ such that $|O_{i,j,l}|-|B|=1$ in $\mathcal{B}.$
\item[5.3.] Include all $B=O_{i,j,l}\cup \{m\}$ such that $m \in \{k+1,\dots,n\}\setminus O_{i,j,l}$ in $\mathcal{B}.$
\end{itemize}
\item[6.] For every $\{i,j,l,m\}\subseteq \{1,\dots,k\},$ we do the following.
\begin{itemize}
\item[6.1.] Let $v_{i,j,l,m}=2G_i+2G_j+2G_l+2G_m$ and $O_{i,j,l,m}=S_2(v_{i,j,l,m})\cap \{k+1,\dots,n\}.$ 
\item[6.2.] Include all $O_{i,j,l,m}$ in $\mathcal{B}.$
\end{itemize}
\end{itemize}

Our method of construction is based on the following theorem.

\begin{thm}\label{TmAlgB}
Let $n \in \{48, 56, 64\}.$ Denote by $S_i(w)$ the set of positions with element $i\in\mathbb{Z}_4$ in $w\in \mathbb{Z}_4^n.$ Let $C$ be an extremal Type II $\mathbb{Z}_4$-code of length $n$ and type $4^{k},$ where $C^{(1)}$ is extremal.
Furthermore, let $\mathcal{S}$ be the collection of all $S\subseteq\{k+1,\dots,n\}$ such that $|S|$ is even and $|S|\geq 6.$
Then $\mathcal{G}=\mathcal{S}\setminus \mathcal{B}$ is the set of all possible $S_2(2u)$ for the code $C$ in the doubling method which lead to an extremal code, where $\mathcal{B}$ is the set obtained by applying Algorithm B.
\end{thm}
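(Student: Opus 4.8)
The plan is to derive Theorem~\ref{TmAlgB} from Theorem~\ref{doubling4k485664}. First I would note that, for an even set $S\subseteq\{k+1,\dots,n\}$ with $|S|\ge 6$, the code $\widetilde C$ obtained by doubling $C$ with any $2u$ such that $S_2(2u)=S$ is extremal if and only if no codeword $v\in C$ satisfies any of conditions~(1)--(3) with this $S$: the ``if'' part is the assertion of Theorem~\ref{doubling4k485664}, and the ``only if'' part follows once one checks (as I do below) that each of (1)--(3) already forces $\langle v,2u\rangle=0$, so that the witnessing $v$ lies in $C_0$ and $v+2u$ is a codeword of $\widetilde C$ of Euclidean weight $8$ or $16$. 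Hence it suffices to prove that $\mathcal B\cap\mathcal S$ is exactly the set of $S\in\mathcal S$ for which such a $v$ exists; then $\mathcal G=\mathcal S\setminus\mathcal B$ is the desired collection. I would then record the structural facts used throughout: since $C$ is self-dual of type $4^{k}$, $k=n/2$ and $C^{(1)}$ is a self-dual doubly even binary $[n,n/2]$ code with generator $[\,I_k\ A^1\,]$, extremal by hypothesis, so all its nonzero codewords have weight $\ge 12$ divisible by $4$; the codewords of $C$ with residue $s^1_F=\sum_{i\in F}G^{(1)}_i$ are exactly $s_F+2s_E$ with $E\subseteq\{1,\dots,k\}$; the even codewords of $C$ are exactly $2s_F$ with $S_2(2s_F)=\operatorname{supp}\bigl(\sum_{i\in F}G^{(1)}_i\bigr)$; and adding $2s_E$ does not change which coordinates are odd, so $|S_1(v)\cup S_3(v)|=wt_H(s^1_F)$ depends only on the residue. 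Consequently conditions~(1), (2), (3) correspond to residues of weight $16$, residues of weight $12$, and even codewords, respectively.

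For the inclusion ``output of Algorithm~B $\subseteq$ bad sets'', I would attach to each output set an explicit witnessing codeword. In Step~2.2 the restriction $E\subseteq F$ is precisely the requirement that $v=s_F+2s_E$ has no coordinate equal to $2$ among its first $k$ positions (an $i\in E\setminus F$ gives $v_i=2$), which condition~(1) forces since $S_2(2u)\subseteq\{k+1,\dots,n\}$; then the sets $B=T_v\cup O$, $O\subseteq O_v$, of Step~2.2.3 are exactly those with $S_2(v)\subseteq S_2(2u)\subseteq\operatorname{supp}(v)$. Step~2.3 does the same for condition~(2), splitting into the case $E\subseteq F$ (Steps~2.3.3--2.3.4, according to whether the single discrepant coordinate lies outside $\operatorname{supp}(v)$ or is a missing $2$) and the case $|E\setminus F|=1$, i.e. $v=v_0+2G_i$ with $i\notin F$ (Steps~2.3.5--2.3.7), the latter giving $v$ with exactly one $2$ on the first $k$ coordinates, which is the maximum condition~(2) allows. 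Steps~3--6 treat the even codewords $2s_F$ with $|F|=1,2,3,4$: there $wt_E(2s_F+2u)=4\,(\,|F|+|S_2(2u)\,\triangle\,O_F|\,)$, and the sets listed make this lie in $\{8,16\}$. Finally, in each of the three conditions one checks that $|S_2(2u)\cap(S_1(v)\cup S_3(v))|$ is forced even --- using $wt_E(v)\equiv 0\pmod 8$ to fix the parity of $|S_2(v)|$ --- so that $\langle v,2u\rangle=0$; this also completes the ``only if'' part of the reduction above, and shows each output set yields a non-extremal $\widetilde C$.

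For the reverse inclusion I would start from a bad $S=S_2(2u)\in\mathcal S$ with a witnessing $v$ and show the algorithm enumerates $S$. If $v$ satisfies~(1), its first-$k$ support $F$ has $|F|\le 16$, so $F$ is reached in Step~2; condition~(1) forces the lift $s_F+2s_E$ with $E\subseteq F$ and $S=S_2(v)\cup O$ with $O\subseteq O_v$, produced by Step~2.2.3. If $v$ satisfies~(2), then $|F|\le 12$ and $v$ has at most one $2$ among its first $k$ coordinates, so $v$ is of the form handled in Step~2.3. If $v$ satisfies~(3), write $v=2y$, $y\in C^{(1)}$, $F=\operatorname{supp}(y)\cap\{1,\dots,k\}$; then $|S|\ge 6$ and $|F|+|S\,\triangle\,O_F|\in\{2,4\}$ force $|F|\in\{1,2,3,4\}$, and one must verify --- using $|\operatorname{supp}(y)|\equiv 0\pmod 4$, whence $|O_F|\equiv -|F|\pmod 4$, together with $|S|$ even --- that the pair $(\,|S\setminus O_F|,\ |O_F\setminus S|\,)$ has one of the shapes enumerated in the matching Step~3, 4, 5 or~6, or is otherwise already produced in Steps~3--6 via a different choice of $F$.

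The step I expect to be the main obstacle is exactly this last verification: proving that the finite case distinctions of Steps~3--6 (and, less delicately, of Steps~2.2--2.3) are exhaustive, i.e. that no bad $S$ escapes them. Divisibility by $4$ of the weights of $C^{(1)}$ fixes $|O_F|\bmod 4$, and evenness of $|S_2(2u)|$ together with $wt_E(v)\equiv 0\pmod 8$ control the rest of the parities; but one still has to dispose of the ``mixed'' symmetric-difference patterns (simultaneous insertions and deletions relative to $O_F$), showing either that they cannot occur or that the corresponding $S$ is produced by a different choice of $F$, and to do this uniformly for all three lengths $n\in\{48,56,64\}$. This exhaustiveness bookkeeping is the technical core of the proof.
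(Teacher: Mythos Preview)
Your plan follows exactly the paper's route: reduce the statement to Theorem~\ref{doubling4k485664} and then argue that Algorithm~B enumerates precisely the sets $S_2(2u)$ for which some $v\in C$ satisfies one of conditions~(1)--(3). The paper's own proof does this in a few lines: it observes that condition~(1) forces all row-coefficients of $v$ to lie in $\{0,1,3\}$ (hence Step~2.2), that condition~(2) allows at most one coefficient equal to~$2$ (hence Step~2.3 with its two subcases, handled in 2.3.1--2.3.4 and 2.3.5--2.3.7 respectively), and that in condition~(3) the bound $|S_2(v)\setminus S_2(2u)|\le 4$ forces $v$ to be a sum of at most four doubled rows (hence Steps~3--6); it then asserts that the corresponding steps collect all the relevant $S_2(2u)$.

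You go further than the paper on two points. First, you supply the parity check (using $wt_E(v)\equiv 0\pmod 8$ to fix the parity of $|S_2(v)|$, together with $|S_2(2u)|$ even) showing that any $v$ satisfying (1), (2) or (3) automatically has $\langle v,2u\rangle=0$, so that $v\in C_0$ and $v+2u\in\widetilde C$; the paper leaves this implicit. Second, you flag as the main obstacle the exhaustiveness of Steps~3--6 for condition~(3), specifically the ``mixed'' symmetric-difference patterns (simultaneous insertions and deletions relative to $O_F$). This is precisely where the paper is tersest: for condition~(3) it writes only that ``these codewords are considered in Steps~3., 4., 5.\ and~6.'' and does not perform the case-by-case verification you outline. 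Your concern is therefore legitimate rather than misplaced; a fully written-out proof would need to supply that bookkeeping, but the overall strategy and the decomposition you propose coincide with the paper's.
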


\begin{proof} The first condition in Theorem \ref{doubling4k485664} is checked in Step 2.2. of Algorithm B. Since the condition requires that $S_2(v)\subseteq S_2(2u),$ the coefficients of the rows of  $G$ in the linear combination of $v$ can only be $0, 1$ or $3.$ Step 2.2.1 generates all such codewords $v$ with $|S_1(v)\cup S_3(v)|=16.$ All subsets $B=S_2(2u)$ satisfying the first condition are included in $\mathcal{B}$ in Step 2.2.3.

The second condition in Theorem \ref{doubling4k485664} is checked in Step 2.3. If $|S_2(v)\setminus S_2(2u)|=0,$  the coefficients of rows of  $G$ in the linear combination of $v$ can only be $0, 1$ or $3.$  Step 2.3.1 generates all such codewords $v$ with $|S_1(v)\cup S_3(v)|=12.$ 
When $|S_2(v)\setminus S_2(2u)|=1,$ $v$ is a linear combination of rows of $G$ with at most one coefficient $2.$ 
All codewords $v$ with exactly one coefficient $2$ satisfying $|S_1(v)\cup S_3(v)|=12$  are constructed in Step 2.3.5.
All subsets $B=S_2(2u)$ satisfying the second condition are included in $\mathcal{B}$ in Steps 2.3.3., 2.3.4 and 2.3.7.

Since $|S_2(v)\setminus S_2(2u)|\in \{1,2,3,4\}$ in the third condition of Theorem \ref{doubling4k485664}, $v$ is the sum of at most four rows of $G$ with coefficients $2.$
These codewords are considered in Steps 3., 4., 5. and 6.
 
\end{proof}

\section{Computational results} \label{results}

We consider all known extremal Type II $\mathbb{Z}_4$-codes of lengths $48$, $56$ and $64$ of type $4^k$.

\subsection{Lengths n=48 and n=56} \label{s48}

Two inequivalent Type II $\mathbb{Z}_4$-codes of length $48$ are known (\cite{BonnecazeSoleBachoc}, \cite{HaradaKitazume}) with notation $C_{48p}^{(4)}$ from \cite{HaradaKitazume} and  $\mathcal{D}_{48}$ from \cite{HaradaLee}. Both codes are of type $4^{24}$.

The extremal Type II $\mathbb{Z}_4$-code  $C_{48p}^{(4)}$ of type $4^{24}$ and length $48$ has a residue code of minimum weight $8$, and it cannot be used to construct a new extremal Type II $\mathbb{Z}_4$-code by applying  Theorem \ref{doubling4k485664}.

The residue code of $\mathcal{D}_{48}$ has the minimum weight $12,$ so this code is suitable for applying Theorem \ref{doubling4k485664} to obtain new extremal codes by the doubling method. However, using Algorithm B, we found that there are no candidates $2u$ for a construction of extremal Type II $\mathbb{Z}_4$-codes of type $4^{23}2^2$ and length $48$ from $\mathcal{D}_{48}$ by the doubling method. \\

Five inequivalent extremal Type II $\mathbb{Z}_4$-codes of length $56$ and type $4^{28}$ are known: $\mathcal{C}_{56}$ and $\mathcal{D}_{56,1}$ from \cite{HaradaLee}, and $C_{4,56,i},$ for $i=1,2,3$ from \cite{Hnovi}.

 The extremal Type II $\mathbb{Z}_4$-codes $C_{4,56,1}$ and $C_{4,56,2}$  of type $4^{28}$ and length $56$ have residue codes of minimum weight $8.$ So, we cannot use Theorem \ref{doubling4k485664} to obtain new extremal Type II $\mathbb{Z}_4$-codes from $C_{4,56,1}$ and $C_{4,56,2}$ using the doubling method. 

The minimum weights of the residue codes of $\mathcal{C}_{56}, \mathcal{D}_{56,1}$ and $C_{4,56,3}$   are equal to $12$ and therefore the codes $\mathcal{C}_{56}, \mathcal{D}_{56,1}$ and $C_{4,56,3}$ are suitable for an application of Theorem \ref{doubling4k485664}.  We applied Algorithm B and found that there are no candidates $2u$ for a construction of extremal Type II $\mathbb{Z}_4$-codes of type $4^{27}2^2$ and length $56$ from $\mathcal{C}_{56}, \mathcal{D}_{56,1}$ and $C_{4,56,3}$ by doubling method.\\

\subsection{New extremal Type II $\mathbb{Z}_4$-codes of length n=64}  \label{s64}

Recently, Harada proved the existence of five inequivalent extremal Type II $\mathbb{Z}_4$-codes $C_{4,64,i},$  $i=1,\dots,5,$ of length $64$ and type $4^{32}$ (\cite{Hnovi}).

The minimum weights of the residue codes of $C_{4,64,1}, C_{4,64,2}$ and $C_{4,64,5}$  are equal to $8$, and they are not suitable for applying   Theorem \ref{doubling4k485664} to obtain new extremal Type II $\mathbb{Z}_4$-codes by the doubling method. 

The residue codes of the extremal Type II $\mathbb{Z}_4$-codes $C_{4,64,3}$  and $C_{4,64,4}$ of type $4^{32}$ and length $64$ have minimum weight $12.$ So, we can apply Theorem \ref{doubling4k485664}.

For length $64,$ the search in Algorithm B takes too much time. Therefore, we excluded most of the unsuitable $S_2(2u)$ for $C_{4,64,3}$ and performed a pseudo-random search on the remaining sets. We obtained three new extremal Type II $\mathbb{Z}_4$-codes of type $4^{31}2^2$ and length $64$ for $S_2(2u)$ equal to $\{ 33, 34, 35, 36, 37, 38, 39, 40, 41, 42, 43, 44, 45, 46, 47, 48, 49, 50, 51, 52, 53,\\ 54,$ $ 55, 56 \}, \{ 33, 34,\dots, 64 \}$ and $\{ 33, 34, 35, 36, 37, 38, 39, 40, 41, 42, 43, 44, 45, 46, 47, 48, 49,\\ 50, 51, 52, 53, 54, 55, 56, 57, 58, 59, 60, 61, 63\},$ which will be denoted by $\widetilde{C_{4,64,3}}_i,$ $i=1,2,3,$  respectively.

 The generator matrix of $\widetilde{C_{4,64,3}}_i,\ i=1,2,3,$ in standard form is
\begin{equation*}\label{std1}
\widetilde{G}_i=\left[\begin{tabular}{ccc}
$I_{31}$&$(A)_i$&$(B_1+2B_2)_i$\\
$O$&$2I_{2}$&$(2D)_i$\\
\end{tabular}\right],
\end{equation*}
where 
\begin{equation*}
{(A)_1}^T=\left[\begin{tabular}{c}
1 1 0 1 1 0 0 1 0 0 1 0 0 1 1 0 0 1 1 1 0 0 0 1 0 1 1 1 0 0 0\\
1 1 1 0 1 0 0 1 0 0 1 0 0 0 0 0 1 0 1 1 1 1 0 0 0 1 1 0 1 0 1
\end{tabular}\right],
\end{equation*}
\begin{equation*}
{(A)_2}^T=\left[\begin{tabular}{c}
1 1 1 1 1 1 1 1 1 1 1 1 1 1 1 1 1 1 1 1 1 1 1 1 1 1 1 1 1 1 1\\
1 1 1 0 1 0 0 1 0 0 1 0 0 0 0 0 1 0 1 1 1 1 0 0 0 1 1 0 1 0 1 
\end{tabular}\right],
\end{equation*}
\begin{equation*}
{(A)_3}^T=\left[\begin{tabular}{c}
1 1 1 1 0 0 0 1 0 0 1 1 0 1 1 1 1 1 1 0 1 0 0 1 0 0 1 1 0 0 0\\
1 1 1 0 1 0 0 1 0 0 1 0 0 0 0 0 1 0 1 1 1 1 0 0 0 1 1 0 1 1 0 
\end{tabular}\right],
\end{equation*}
\begin{equation*}
(2D)_1=\left[\begin{tabular}{c}
2 0 2 0 2 2 0 0 0 2 2 2 2 0 2 2 2 0 2 0 0 2 0 0 2 0 0 0 0 0 2\\
2 2 2 2 2 2 2 2 2 2 2 2 2 2 2 2 2 2 2 2 2 2 2 0 0 0 0 0 0 0 0
\end{tabular}\right],
\end{equation*}
\begin{equation*}
(2D)_2=\left[\begin{tabular}{c}
2 0 2 0 2 2 0 0 0 2 2 2 2 0 2 2 2 0 2 0 0 2 0 0 2 0 0 0 0 0 2\\
 2 2 2 2 2 2 2 2 2 2 2 2 2 2 2 2 2 2 2 2 2 2 2 2 2 2 2 2 2 2 2
\end{tabular}\right],
\end{equation*}
\begin{equation*}
(2D)_3=\left[\begin{tabular}{c}
2 0 2 2 0 0 0 2 2 2 2 0 2 0 2 0 2 0 0 2 0 0 2 0 0 0 0 0 2 2 2\\
 2 2 2 2 2 2 2 2 2 2 2 2 2 2 2 2 2 2 2 2 2 2 2 2 2 2 2 2 0 2 0
\end{tabular}\right],
\end{equation*}
{\scriptsize\begin{equation*}
(B_1+2B_2)_1=\left[\begin{tabular}{c}
  3 0 3 0 1 2 0 2 1 3 1 0 3 3 0 2 3 1 2 3 1 1 2 0 0 3 0 1 0 1 1 \\
   2 0 3 2 3 1 1 0 0 0 1 3 2 2 0 3 2 2 0 1 3 2 0 0 1 3 1 2 1 0 0 \\
   1 1 2 2 2 0 2 3 2 0 1 0 0 1 2 1 0 3 0 3 1 3 1 2 0 0 3 3 2 1 0 \\  0 1 0 0 3 3 0 2 1 3 3 2 3 2 0 3 2 0 2 2 1 0
      1 3 3 0 2 3 3 2 0 \\  1 1 0 3 3 3 2 0 0 0 1 1 0 2 3 2 3 1 3 1 2 2 3 1 0 2 2 0 3 3 1 \\
   3 2 1 1 1 2 2 2 0 2 3 2 0 1 0 0 1 2 1 0 3 0 3 3 1 3 2 0 0 3 3 \\
   0 3 2 1 1 1 2 2 2 0 2 3 2 0 1 3 0 1 2 1 0 3 0 1 3 1 3 2 0 0 3 \\
   3 0 2 0 2 2 1 2 0 3 3 3 2 0 3 0 0 0 0 0 3 3 1 2 2 3 3 3 2 0 3 \\  3 0 0 3 2 1 1 1 2 2 2 0 2 3 2 2 3 3 0 1 2 1
      0 1 2 1 3 1 3 2 0 \\  2 1 2 2 1 0 3 3 3 0 0 0 2 0 1 0 0 1 1 2 3 0 3 2 1 2 1 3 1 3 2 \\
   3 2 0 0 3 2 0 3 1 0 3 1 3 0 3 3 1 0 0 3 0 2 2 3 3 1 0 1 3 1 2 \\
   3 0 2 1 2 2 1 0 3 3 3 0 0 0 2 1 2 0 0 1 1 2 3 0 3 2 1 2 1 3 1 \\
   2 1 2 0 3 0 0 3 2 1 1 1 2 2 2 1 3 0 2 2 3 3 0 3 0 3 2 1 2 1 3 \\  3 2 0 0 1 0 0 0 1 3 0 2 0 0 1 0 2 3 3 0 0 2
      1 2 0 0 1 2 1 2 0 \\  1 0 1 3 3 1 3 0 2 0 1 2 0 3 1 2 0 1 2 2 0 1 1 1 3 3 2 3 2 1 1 \\
   0 0 2 0 3 0 2 1 2 2 1 0 3 3 3 2 3 1 3 1 2 0 0 1 1 2 3 0 3 2 1 \\
   0 3 2 3 0 1 1 2 2 0 1 3 1 3 0 1 1 1 2 3 0 0 3 2 0 3 0 2 0 0 0 \\
   2 2 0 2 2 3 3 3 2 1 1 0 0 1 0 1 0 3 2 2 3 1 0 1 3 0 1 0 2 0 3 \\  1 3 1 3 1 2 2 3 1 1 3 3 2 3 2 1 1 3 2 1 2 0
      0 2 2 2 2 3 0 2 3 \\  0 0 0 2 0 3 3 0 3 2 1 3 3 3 2 3 3 2 0 3 3 1 1 2 3 1 0 0 3 0 1 \\
   3 1 0 1 0 3 2 3 0 1 1 2 2 0 1 0 2 2 2 1 1 1 2 1 2 2 1 2 0 3 0 \\
   3 1 3 2 3 2 1 0 1 2 3 3 0 0 2 0 2 0 0 0 3 3 3 0 1 2 2 1 2 0 3 \\
   3 1 3 1 0 1 0 3 2 3 0 1 1 2 2 3 2 0 2 2 2 1 1 3 0 1 2 2 1 2 0 \\  3 3 0 1 2 1 1 0 1 3 2 1 0 3 1 3 0 2 3 0 0 1
      3 3 0 0 3 2 2 1 1 \\  0 2 1 3 1 3 2 3 2 1 0 1 2 3 3 2 0 1 0 2 0 0 0 3 3 3 0 1 2 2 1 \\
   1 2 3 1 2 0 1 0 3 1 2 3 2 2 0 2 1 2 2 0 2 1 0 0 0 3 1 0 1 2 1 \\
   2 2 1 2 0 2 3 1 2 2 3 0 1 1 3 1 2 0 1 1 0 3 0 0 1 3 1 3 0 1 1 \\
   0 1 3 2 3 2 3 1 1 3 2 3 0 2 0 3 3 3 1 2 3 3 0 0 1 0 1 3 3 0 0 \\  0 3 3 2 2 0 3 1 3 1 0 1 0 3 2 1 0 0 1 0 2 3
      2 2 0 0 0 3 3 3 0 \\  3 2 1 1 0 0 2 1 3 1 3 2 3 2 1 0 3 2 2 3 2 0 1 0 2 0 0 0 3 3 3 \\
   2 1 0 3 3 2 2 0 3 1 3 1 0 1 0 3 2 1 0 0 1 0 2 1 0 2 0 0 0 3 3 
 \end{tabular}\right],
\end{equation*}}
{\scriptsize\begin{equation*}
(B_1+2B_2)_2=\left[\begin{tabular}{c}
 3 0 3 0 1 2 0 2 1 3 1 0 3 3 0 2 3 1 2 3 1 1 2 0 0 3 0 1 0 1 1 \\
   2 0 3 2 3 1 1 0 0 0 1 3 2 2 0 3 2 2 0 1 3 2 0 0 1 3 1 2 1 0 0 \\
   2 3 3 2 1 3 0 1 2 3 2 3 1 1 3 0 3 1 1 3 1 0 1 2 1 0 1 3 2 1 3 \\ 
 0 1 0 0 3 3 0 2 1 3 3 2 3 2 0 3 2 0 2 2 1 0 1 1 1 2 0 1 1 0 2 \\ 
 1 1 0 3 3 3 2 0 0 0 1 1 0 2 3 2 3 1 3 1 2 2 3 3 2 0 0 2 1 1 3 \\
   0 0 2 1 0 1 0 0 0 1 0 1 1 1 1 3 0 0 2 0 3 1 3 3 2 3 0 0 0 3 2 \\
   1 1 3 1 0 0 0 0 2 3 3 2 3 0 2 2 3 3 3 1 0 0 0 1 0 1 1 2 0 0 2 \\
   3 0 2 0 2 2 1 2 0 3 3 3 2 0 3 0 0 0 0 0 3 3 1 0 0 1 1 1 0 2 1 \\ 
 0 2 1 3 1 0 3 3 2 1 3 3 3 3 3 1 2 1 1 1 2 2 0 1 3 1 1 1 3 2 3 \\
  3 3 3 2 0 3 1 1 3 3 1 3 3 0 2 3 3 3 2 2 3 1 3 0 0 0 1 1 3 1 3 \\
   3 2 0 0 3 2 0 3 1 0 3 1 3 0 3 3 1 0 0 3 0 2 2 3 3 1 0 1 3 1 2 \\
   0 2 3 1 1 1 3 2 3 2 0 3 1 0 3 0 1 2 1 1 1 3 3 2 2 0 1 0 3 1 2 \\
   3 3 3 0 2 3 2 1 2 0 2 0 3 2 3 0 2 2 3 2 3 0 0 3 1 3 0 1 2 1 2 \\ 
 3 2 0 0 1 0 0 0 1 3 0 2 0 0 1 0 2 3 3 0 0 2 1 0 2 2 3 0 3 0 2 \\ 
 1 0 1 3 3 1 3 0 2 0 1 2 0 3 1 2 0 1 2 2 0 1 1 3 1 1 0 1 0 3 3 \\
   1 2 3 0 2 3 0 3 2 1 2 3 0 3 0 1 2 3 0 1 2 1 0 3 0 0 3 2 1 0 2 \\
   1 1 3 3 3 0 3 0 2 3 2 2 2 3 1 0 0 3 3 3 0 1 3 2 1 3 2 2 0 0 3 \\
   2 2 0 2 2 3 3 3 2 1 1 0 0 1 0 1 0 3 2 2 3 1 0 1 3 0 1 0 2 0 3 \\ 
 1 3 1 3 1 2 2 3 1 1 3 3 2 3 2 1 1 3 2 1 2 0 0 2 2 2 2 3 0 2 3 \\ 
 0 0 0 2 0 3 3 0 3 2 1 3 3 3 2 3 3 2 0 3 3 1 1 0 1 3 2 2 1 2 3 \\
   0 3 1 1 3 2 0 1 0 0 2 1 3 0 2 3 1 0 3 1 1 2 2 1 3 2 3 2 0 3 3 \\
   0 3 0 2 2 1 3 2 1 1 0 2 1 0 3 3 1 2 1 0 3 0 3 2 0 0 2 3 0 2 0 \\
   0 3 0 1 3 0 2 1 2 2 1 0 2 2 3 2 1 2 3 2 2 2 1 3 1 1 0 2 1 2 3 \\ 
 3 3 0 1 2 1 1 0 1 3 2 1 0 3 1 3 0 2 3 0 0 1 3 1 2 2 1 0 0 3 3 \\ 
 1 0 2 3 0 2 0 1 2 0 1 0 3 3 0 1 3 3 1 2 0 1 0 1 2 1 0 3 0 0 2 \\
   1 2 3 1 2 0 1 0 3 1 2 3 2 2 0 2 1 2 2 0 2 1 0 2 2 1 3 2 3 0 3 \\
   2 2 1 2 0 2 3 1 2 2 3 0 1 1 3 1 2 0 1 1 0 3 0 2 3 1 3 1 2 3 3 \\
   0 1 3 2 3 2 3 1 1 3 2 3 0 2 0 3 3 3 1 2 3 3 0 0 1 0 1 3 3 0 0 \\ 
 1 1 0 2 1 3 1 3 3 0 1 0 1 3 3 0 3 2 2 0 2 0 2 2 1 0 2 3 3 3 3 \\ 
 0 0 2 1 3 3 0 3 3 0 0 1 0 2 2 3 2 0 3 3 2 1 1 2 1 2 0 2 1 1 0 \\
   3 3 1 3 2 1 0 2 3 0 0 0 1 1 1 2 1 3 1 0 1 1 2 1 1 2 2 0 0 3 2 \\
\end{tabular}\right],
\end{equation*}}
{\scriptsize\begin{equation*}
(B_1+2B_2)_3=\left[\begin{tabular}{c}
1  0  3  1  2  3  0  1  0  1  3  3  1  1  0  3  3  1  3  2  3  0  3  2  3  1  0  3  3  2  1 \\
    0  0  3  3  0  2  1  3  3  2  3  2  0  0  0  0  2  2  1  0  1  1  1  2  0  1  1  0  0  1  0 \\
    0  3  3  3  2  0  0  0  1  1  0  2  3  3  3  1  3  1  2  2  3  3  2  0  0  2  1  1  1  2  3 \\
    2  1  0  1  0  0  0  1  0  1  1  1  1  0  0  0  2  0  3  1  3  3  2  3  0  0  0  3  2  1  0 \\
    0  3  3  3  0  0  0  2  0  1  0  2  3  2  2  3  0  3  2  1  2  1  3  1  3  2  0  0  3  3  2 \\
    3  2  1  1  1  2  2  2  0  2  3  2  0  1  0  0  1  2  1  0  3  0  3  1  3  1  0  2  0  1  3 \\
    0  3  2  1  1  1  2  2  2  0  2  3  2  0  1  3  0  1  2  1  0  3  0  3  1  3  1  0  0  2  3 \\
    1  0  2  1  3  3  1  1  3  1  1  2  0  2  3  1  0  0  1  3  1  2  2  2  3  3  1  3  1  3  3 \\
    3  0  0  3  2  1  1  1  2  2  2  0  2  3  2  2  3  3  0  1  2  1  0  3  0  3  1  3  3  0  0 \\
    2  1  2  2  1  0  3  3  3  0  0  0  2  0  1  0  0  1  1  2  3  0  3  2  1  2  1  3  1  3  2 \\
    1  2  0  1  0  3  0  2  0  2  1  0  1  2  3  0  1  0  1  2  2  1  3  1  2  3  0  3  2  2  2 \\
    2  2  3  2  2  2  3  1  2  0  2  2  3  2  3  1  1  2  2  0  3  2  0  0  1  2  1  2  0  2  0 \\
    2  1  2  0  3  0  0  3  2  1  1  1  2  2  2  1  3  0  2  2  3  3  0  1  2  1  0  3  2  3  3 \\
    1  2  0  1  2  1  0  3  0  1  2  1  2  2  1  1  2  3  0  3  2  1  2  2  1  0  3  2  0  1  0 \\
    3  0  1  0  0  2  3  3  1  2  3  1  2  1  1  3  0  1  3  1  2  0  2  1  0  3  0  3  1  0  1 \\
    3  2  3  1  3  0  0  2  1  3  0  2  2  1  0  2  2  3  1  0  0  0  1  1  3  2  3  0  2  1  0 \\
    3  1  3  0  0  1  3  3  1  1  0  1  0  1  1  1  0  3  0  2  2  0  0  0  0  1  2  0  3  1  3 \\
    0  2  0  3  3  0  3  2  1  3  3  3  2  3  0  2  0  3  3  1  1  0  1  3  2  2  1  2  1  1  3 \\
    3  3  1  0  2  3  2  2  0  3  1  2  0  1  2  2  1  3  3  0  0  3  1  0  1  0  2  1  3  3  3 \\
    3  2  3  2  1  0  1  2  3  3  0  0  2  3  1  0  0  0  3  3  3  0  1  2  2  1  2  0  3  0  2 \\
    2  3  1  2  0  3  0  0  3  2  0  0  1  2  2  0  1  0  0  0  3  1  3  3  2  0  3  0  3  0  3 \\
    3  1  3  2  3  2  1  0  1  2  3  3  0  0  2  0  2  0  0  0  3  3  3  0  1  2  2  1  2  0  3 \\
    3  1  3  1  0  1  0  3  2  3  0  1  1  2  2  3  2  0  2  2  2  1  1  1  2  3  0  0  1  0  0 \\
    1  3  0  2  3  2  1  3  0  1  0  0  2  1  1  0  0  2  0  3  2  0  0  3  1  0  1  2  1  0  1 \\
    0  2  1  3  1  3  2  3  2  1  0  1  2  3  3  2  0  1  0  2  0  0  0  3  3  3  0  1  2  2  1 \\
    0  0  2  1  3  1  3  2  3  2  1  0  1  2  3  3  2  0  1  0  2  0  0  0  3  3  3  0  1  2  2 \\
    0  2  1  3  1  3  3  0  1  0  1  3  3  3  3  2  2  0  2  0  2  2  1  0  2  3  3  3  3  0  1 \\
    2  1  3  3  0  3  3  0  0  1  0  2  2  0  0  0  3  3  2  1  1  2  1  2  0  2  1  1  2  1  0 \\
    0  3  3  2  2  0  3  1  3  1  0  1  0  3  2  1  0  0  1  0  2  3  2  0  2  2  2  1  3  1  0 \\
    2  1  0  3  3  2  2  0  3  1  3  1  0  1  0  3  2  1  0  0  1  0  2  3  2  0  2  2  0  1  3 \\
    1  2  1  0  3  3  2  2  0  3  1  3  1  0  1  3  3  2  1  0  0  1  0  2  3  2  0  2  0  2  3 
\end{tabular}\right].
\end{equation*}
}

We used Lemma \ref{res} to check the extremality of $\widetilde{C_{4,64,3}}_1, \widetilde{C_{4,64,3}}_2$ and $\widetilde{C_{4,64,3}}_3.$ We calculated the weight distributions of the corresponding residue codes using Magma (\cite{magma}). The corresponding residue codes ${\widetilde{C_{4,64,3}}_i}^{(1)}$, $i=1,2,3$,  are binary $[64,31,12]$ codes. Their weight distributions, given in Table \ref{tablica} ($W_{64-i}=W_i$), show that the constructed extremal $\mathbb{Z}_4$-codes $\widetilde{C_{4,64,3}}_i,\ i=1,2,3,$ are not equivalent.

\begin{table}[h] 
	\centering
		\begin{tabular}{|c|c|c|c|c|c|c|c|c|}
		\hline
			&$i$& 0&12&16&20&24&28&32\\
			\hline
			 ${\widetilde{C_{4,64,3}}_1}^{(1)}$&$W_i$& 1&1552&228812&9132752&116710080&521006880&853323494\\
			\hline
			${\widetilde{C_{4,64,3}}_2}^{(1)}$&$W_i$& 1&1696&228140&9124896&116763456&520871232&853504806\\
			\hline
			${\widetilde{C_{4,64,3}}_3}^{(1)}$&$W_i$&
            1&1548&227316&9136668&116716704&520969176&853380822\\
			\hline
		\end{tabular}\caption{Weight distributions of the residue codes}\label{tablica} 
\end{table}
 
 According to \cite{codetables}, binary $[64,31,12]$ codes are the best known binary $[64,31]$ codes. Furthermore, since the best known binary $[64,31]$ code from \cite{codetables} has $W_{12}=10309,$
the residue codes of $\widetilde{C_{4,64,3}}_i,\ i=1,2,3,$  are new best known binary codes with these parameters.\\

Our results, together with [\cite{Hnovi}, Proposition 6.2.], yield the following statement.
\begin{thm}
There are at least nine inequivalent extremal Type II $\mathbb{Z}_4$-codes of length $64.$
\end{thm}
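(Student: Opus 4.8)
The plan is to combine the newly constructed codes with the previously known extremal Type II $\mathbb{Z}_4$-codes of length $64$ and verify that the total count of inequivalence classes is at least nine. First I would recall that, by \cite{Hnovi}, there are already six inequivalent extremal Type II $\mathbb{Z}_4$-codes of length $64$: the five codes $C_{4,64,i}$, $i=1,\dots,5$, of type $4^{32}$, together with the code of type $4^{16}2^{32}$ from \cite{5664} (this is precisely the content of [\cite{Hnovi}, Proposition 6.2.]). None of these six codes can be equivalent to any of the three codes $\widetilde{C_{4,64,3}}_i$, $i=1,2,3$, because the latter are of type $4^{31}2^2$, and the type is a monomial invariant of a $\mathbb{Z}_4$-code; hence a code of type $4^{31}2^2$ cannot be equivalent to one of type $4^{32}$ or $4^{16}2^{32}$.

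Next I would argue that the three new codes $\widetilde{C_{4,64,3}}_1$, $\widetilde{C_{4,64,3}}_2$, $\widetilde{C_{4,64,3}}_3$ are pairwise inequivalent. Equivalent $\mathbb{Z}_4$-codes have equivalent residue codes, so in particular the Hamming weight distributions of the residue codes must coincide. From Table \ref{tablica}, the values of $W_{12}$ for the three residue codes are $1552$, $1696$ and $1548$ respectively, which are pairwise distinct; therefore no two of the three new codes are equivalent. Combining this with the previous paragraph, all $6 + 3 = 9$ codes are pairwise inequivalent, and each is an extremal Type II $\mathbb{Z}_4$-code of length $64$ (extremality of the three new codes was verified via Lemma \ref{res}, and extremality of the six known codes is established in the cited references). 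This gives the claimed bound of at least nine inequivalent codes.

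I do not expect any genuine obstacle here: the argument is essentially bookkeeping, resting on two standard facts — that the type $4^{k_1}2^{k_2}$ is a monomial invariant, and that monomially equivalent $\mathbb{Z}_4$-codes have permutation-equivalent (hence equidistributed) residue codes. The only point requiring a moment's care is making explicit that the residue-code weight distribution is a monomial invariant: sign changes on coordinates do not affect which positions of a codeword are nonzero modulo $2$, and coordinate permutations act as permutations on the residue code, so $W_i$ is preserved. Once this is noted, the distinctness of the $W_{12}$ column in Table \ref{tablica} does the work of separating the three new codes, and the type invariant separates them as a block from the six older ones.
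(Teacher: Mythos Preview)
Your proposal is correct and follows exactly the approach the paper uses: the paper itself offers no detailed proof, stating only that the result follows from the new constructions together with [\cite{Hnovi}, Proposition~6.2.], having already observed from Table~\ref{tablica} that the distinct residue weight distributions force the three new codes to be pairwise inequivalent. You have simply made explicit the two invariance facts (type and residue weight distribution under monomial equivalence) that the paper leaves implicit.
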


\begin{remark}
{\rm
 An incidence structure ${\mathcal D} =( {\mathcal P},{\mathcal B},{\mathcal I})$, with point set ${\mathcal P}$,
block set ${\mathcal B}$ and incidence ${\mathcal I}$ is a {\it $t$-$(v,k,\lambda)$ design}, if $|{\mathcal P}|=v$, every block
$B \in {\mathcal B}$ is incident with exactly $k$ points, and every $t$ distinct points are together incident with precisely $\lambda$ blocks.

If the condition
 $$
 |B_i \cap B_j| \equiv |B_k|\equiv 0\ (\text{mod}\ 2)
 $$
 is satisfied for all blocks $B_i, B_j$ and $B_k$ of ${\mathcal D},$ we say that ${\mathcal D}$ is a {\it self-orthogonal design.
}

The supports of the minimum weight codewords in ${\widetilde{C_{4,64,3}}_2}^{(1)}$ form a self-orthogonal $1$-$(64,12,318)$ design with $1696$ blocks and block intersection numbers $0,2,4$ and $6.$
The supports of the minimum weight codewords in ${\widetilde{C_{4,64,3}}_2}^{(2)}$ form a quasi-symmetric self-orthogonal $1$-$(64,8,3)$ design with $24$ blocks and block intersection numbers $0$ and $2.$

}
\end{remark}

\end{document}